\tikzset{mybrace/.style={decorate, decoration={brace, mirror, amplitude=5pt}}}
\newtheorem{theorem}{Theorem}
\newtheorem{definition}{Definition}
\newtheorem{observation}{Observation}
\newtheorem{lemma}{Lemma}
\newtheorem{proposition}{Proposition}
\newtheorem{conjecture}{Conjecture}
\newtheorem{example}{Example}
\newtheorem{corollary}{Corollary}
\def\squareforqed{\hbox{\rlap{$\sqcap$}$\sqcup$}}
\def\qed{\ifmmode\squareforqed\else{\unskip\nobreak\hfil
		\penalty50\hskip1em\null\nobreak\hfil\squareforqed
		\parfillskip=0pt\finalhyphendemerits=0\endgraf}\fi}
\def\endenv{\ifmmode\;\else{\unskip\nobreak\hfil
		\penalty50\hskip1em\null\nobreak\hfil\;
		\parfillskip=0pt\finalhyphendemerits=0\endgraf}\fi}
\newenvironment{proof}{\noindent \textbf{{Proof.~} }}{\qed}
\def\Dbar{\leavevmode\lower.6ex\hbox to 0pt
	{\hskip-.23ex\accent"16\hss}D}
\def\url@leostyle{%
	\@ifundefined{selectfont}{\def\UrlFont{\sf}}{\def\UrlFont{\small\ttfamily}}}
\def\bcj{\begin{conjecture}}
	\def\ecj{\end{conjecture}}
\def\bcr{\begin{corollary}}
	\def\ecr{\end{corollary}}
\def\bd{\begin{definition}}
	\def\ed{\end{definition}}
\def\bea{\begin{eqnarray}}
	\def\eea{\end{eqnarray}}
\def\bem{\begin{enumerate}}
	\def\eem{\end{enumerate}}
\def\bex{\begin{example}}
	\def\eex{\end{example}}
\def\bim{\begin{itemize}}
	\def\eim{\end{itemize}}
\def\bl{\begin{lemma}}
	\def\el{\end{lemma}}
\def\bpf{\begin{proof}}
	\def\epf{\end{proof}}
\def\bpp{\begin{proposition}}
	\def\epp{\end{proposition}}
\def\bqu{\begin{question}}
	\def\equ{\end{question}}
\def\br{\begin{remark}}
	\def\er{\end{remark}}
\def\bt{\begin{theorem}}
	\def\et{\end{theorem}}
	\def\bt{\begin{observation}}
		\def\et{\end{observation}}
\def\btb{\begin{tabular}}
	\def\etb{\end{tabular}}
	\newcommand{\nc}{\newcommand}
	\nc{\bbA}{\mathbb{A}} \nc{\bbB}{\mathbb{B}} \nc{\bbC}{\mathbb{C}}
	\nc{\bbD}{\mathbb{D}} \nc{\bbE}{\mathbb{E}} \nc{\bbF}{\mathbb{F}}
	\nc{\bbG}{\mathbb{G}} \nc{\bbH}{\mathbb{H}} \nc{\bbI}{\mathbb{I}}
	\nc{\bbJ}{\mathbb{J}} \nc{\bbK}{\mathbb{K}} \nc{\bbL}{\mathbb{L}}
	\nc{\bbM}{\mathbb{M}} \nc{\bbN}{\mathbb{N}} \nc{\bbO}{\mathbb{O}}
	\nc{\bbP}{\mathbb{P}} \nc{\bbQ}{\mathbb{Q}} \nc{\bbR}{\mathbb{R}}
	\nc{\bbS}{\mathbb{S}} \nc{\bbT}{\mathbb{T}} \nc{\bbU}{\mathbb{U}}
	\nc{\bbV}{\mathbb{V}} \nc{\bbW}{\mathbb{W}} \nc{\bbX}{\mathbb{X}}
	\nc{\bbZ}{\mathbb{Z}}
	\nc{\bA}{{\bf A}} \nc{\bB}{{\bf B}} \nc{\bC}{{\bf C}}
	\nc{\bD}{{\bf D}} \nc{\bE}{{\bf E}} \nc{\bF}{{\bf F}}
	\nc{\bG}{{\bf G}} \nc{\bH}{{\bf H}} \nc{\bI}{{\bf I}}
	\nc{\bJ}{{\bf J}} \nc{\bK}{{\bf K}} \nc{\bL}{{\bf L}}
	\nc{\bM}{{\bf M}} \nc{\bN}{{\bf N}} \nc{\bO}{{\bf O}}
	\nc{\bP}{{\bf P}} \nc{\bQ}{{\bf Q}} \nc{\bR}{{\bf R}}
	\nc{\bS}{{\bf S}} \nc{\bT}{{\bf T}} \nc{\bU}{{\bf U}}
	\nc{\bV}{{\bf V}} \nc{\bW}{{\bf W}} \nc{\bX}{{\bf X}}
	\nc{\ba}{{\bf a}} \nc{\be}{{\bf e}} \nc{\bu}{{\bf u}}
	\nc{\brr}{{\bf r}} \nc{\bx}{{\bf x}}
	\nc{\cA}{{\cal A}} \nc{\cB}{{\cal B}} \nc{\cC}{{\cal C}}
	\nc{\cD}{{\cal D}} \nc{\cE}{{\cal E}} \nc{\cF}{{\cal F}}
	\nc{\cG}{{\cal G}} \nc{\cH}{{\cal H}} \nc{\cI}{{\cal I}}
	\nc{\cJ}{{\cal J}} \nc{\cK}{{\cal K}} \nc{\cL}{{\cal L}}
	\nc{\cM}{{\cal M}} \nc{\cN}{{\cal N}} \nc{\cO}{{\cal O}}
	\nc{\cP}{{\cal P}} \nc{\cQ}{{\cal Q}} \nc{\cR}{{\cal R}}
	\nc{\cS}{{\cal S}} \nc{\cT}{{\cal T}} \nc{\cU}{{\cal U}}
	\nc{\cV}{{\cal V}} \nc{\cW}{{\cal W}} \nc{\cX}{{\cal X}}
	\nc{\cZ}{{\cal Z}}
	\nc{\hA}{{\hat{A}}} \nc{\hB}{{\hat{B}}} \nc{\hC}{{\hat{C}}}
	\nc{\hD}{{\hat{D}}} \nc{\hE}{{\hat{E}}} \nc{\hF}{{\hat{F}}}
	\nc{\hG}{{\hat{G}}} \nc{\hH}{{\hat{H}}} \nc{\hI}{{\hat{I}}}
	\nc{\hJ}{{\hat{J}}} \nc{\hK}{{\hat{K}}} \nc{\hL}{{\hat{L}}}
	\nc{\hM}{{\hat{M}}} \nc{\hN}{{\hat{N}}} \nc{\hO}{{\hat{O}}}
	\nc{\hP}{{\hat{P}}} \nc{\hR}{{\hat{R}}} \nc{\hS}{{\hat{S}}}
	\nc{\hT}{{\hat{T}}} \nc{\hU}{{\hat{U}}} \nc{\hV}{{\hat{V}}}
	\nc{\hW}{{\hat{W}}} \nc{\hX}{{\hat{X}}} \nc{\hZ}{{\hat{Z}}}
	\nc{\hn}{{\hat{n}}}
	\def \qed {\hfill \vrule height7pt width 7pt depth 0pt}
	\newcounter{lastnote}
	\definecolor{cream}{RGB}{203, 237, 204}
	\definecolor{lightyellow}{RGB}{255, 255, 204}
	\definecolor{lightblue}{RGB}{204, 204, 255}
	\definecolor{lightgreen}{RGB}{204, 255, 204}
	\definecolor{red}{RGB}{255, 102, 102}
\begin{document}
	\title{New constructions of multipartite entanglement resistant to particle loss}

\author{Wanchen Zhang}
\affiliation{School of Mathematical Sciences,
	University of Science and Technology of China, Hefei, 230026,  China}
\affiliation{Hefei National Laboratory, University of Science and Technology of China, Hefei, 230088, China}

\author{Zicheng Han }
\affiliation{School of Mathematical Sciences,
	University of Science and Technology of China, Hefei, 230026,  China}

\author{Fei Shi}
\affiliation{Department of Computer Science, School of Computing and Data Science, University of Hong Kong, Hong Kong, 999077, China}	

\author{Xiande Zhang}
\email[]{Corresponding author: drzhangx@ustc.edu.cn}
\affiliation{School of Mathematical Sciences,
	University of Science and Technology of China, Hefei, 230026,  China}
\affiliation{Hefei National Laboratory, University of Science and Technology of China, Hefei, 230088, China}

\begin{abstract}
 An entangled state is called $m$-resistant if it remains entangled after losing an arbitrary subset of $m$ particles but becomes fully separable after losing any number of particles larger than $m$. Quinta \emph{et al.} [\href{https://journals.aps.org/pra/abstract/10.1103/PhysRevA.100.062329}{Phys. Rev. A  (2019)}] conjectured that for any $N$-particle systems, there always exists  an $m$-resistant pure state.
 In this paper, we give two general constructions of $m$-resistant pure states. One is from the mixtures of Dicke states, which provides strong $(N-k)$-resistant pure $N$-qubit states with $k=4$ or $5$. The other is from classical error correcting codes, which provides new $m$-resistant qudit states for certain $m<N/2$.
\end{abstract}
\maketitle
\vspace{-0.5cm}
%

\section{Introduction}\label{sec:int}


  Entanglement is one of the most remarkable features of quantum mechanics, showcasing strong correlations between particles in a quantum system such that the state of one cannot be described independently of the others, even when separated by large distances.
   The persistence or alteration of entanglement in reduced states hinges on the initial state and the specific manner in which subsystems are lost \cite{PhysRevA.59.141,PhysRevA.65.032328,sugita2007borromean,PhysRevA.97.042307,PhysRevA.98.062335,PhysRevA.100.062329}. Various criteria and measures, including partial transposition, entanglement witnesses, and other entanglement monotones, have been developed to assess the entanglement properties of these reduced states \cite{PhysRevLett.109.233601,RevModPhys.81.865,GUHNE20091}. The investigation into how entanglement endures or transforms under the loss of subsystems remains an active area of research within quantum information theory \cite{aravind1997borromean,PhysRevA.59.141,PhysRevA.63.020303,PhysRevA.65.032328,sugita2007borromean,balasubramanian2017multi,PhysRevA.97.042307,PhysRevA.98.062335,PhysRevA.100.062329,PhysRevResearch.3.043120}.
  When some parties in a multipartite entangled pure state are lost, a natural question arises: does the reduced state remain entangled? To answer this question, previous works have referred to entangled pure states that retain their entanglement properties despite the loss of particles as being robust against particle loss \cite{PhysRevA.59.141,PhysRevA.63.020303,PhysRevA.98.062335,PhysRevResearch.3.043120}.

   For instance, consider a three-qubit \emph{Greenberger-Horne-Zeilinger} (GHZ) state
 $\ket{\text{GHZ}} = (\ket{000} + \ket{111})/\sqrt{2}$.
  Upon losing one qubit, the resulting two-qubit state will no longer exhibit entanglement but instead become a separable state devoid of quantum correlations. Consequently, GHZ states lack robustness against particle loss.
  In contrast, there exist multipartite entangled states that can preserve some form of entanglement after losing one or more parties. A prime example is the three-qubit W state, defined as: $ |\text{W}\rangle = (|100\rangle + |010\rangle + |001\rangle)/\sqrt{3}$. Even after losing one qubit, the remaining two-qubit state continues to display entanglement. This phenomenon makes W states resilient to the loss of a single particle.
  Notably, robustness to losses has significant implications from a physical perspective. For instance, Dicke states (including W states) exhibit remarkable resilience against particle loss, which makes them particularly suitable for applications such as quantum memories \cite{lvovsky2009optical,PhysRevA.86.042113}.

 In general,   an entangled pure state is called $m$-resistant if the entanglement of the reduced state of $N-m$ subsystems is fragile with respect to the loss of any additional subsystem \cite{PhysRevA.100.062329,PhysRevA.98.062335}.
  In recent years more and more researchers have started to pay attention to this kind of states with special properties. On the one hand, the existence of such states can refine the classification of multipartite entanglement \cite{PhysRevA.100.062329,PhysRevA.97.042307}. On the other hand,
  entanglement robustness and fragility has been studied for a long time \cite{PhysRevLett.80.2245,PhysRevA.62.050302,PhysRevLett.86.910,PhysRevA.63.020303,PhysRevA.65.032328,PhysRevA.67.022112,PhysRevA.86.052335,PhysRevA.87.052117}. The $m$-resistant property is a critical condition for entanglement robustness against particle loss, which is robust with respect to the loss of any $m$ particles but not for $m+1$ \cite{PhysRevA.100.062329,PhysRevA.98.062335}.

   Quinta \emph{et al.} \cite{PhysRevA.98.062335} conjectured that for any $N$-particle systems, there always exists an $m$-resistant state \cite{PhysRevA.100.062329}. In this paper,
   we give two general construction of $m$-resistant states. The main contributions are summarized as follows:
   \begin{itemize}
   	\item The concept of \emph{strong} $m$-resistant states is proposed based on the original concept of $m$-resistant states.
   	Compared to $m$-resistant states, which reflect the entanglement robustness against particle loss, strong $m$-resistant states reflect the genuine entanglement robustness against particle loss.
   	A strong $m$-resistant state is of course an $m$-resistant state, and there are more application scenarios for  strong $m$-resistant states \cite{PhysRevA.58.4394,steane1998quantum,PhysRevLett.109.233601,PhysRevResearch.3.043120,shi2024entanglement}.
   	\item We present a method to construct strong $m$-resistant pure $N$-qubit states using mixtures of Dicke states. The proposal of this method is significant for several reasons. Firstly, this method gives a construction of strong $(N-k)$-resistant states with $k=4$ or $5$, which was not obtained by any of the previous methods. Secondly, this method is applicable to arbitrary strong $m$-resistant state constructions in the sense that,  if the mixture of Dicke states is chosen delicately enough, it is promising to obtain more results that have not been obtained before. Finally, it is difficult to give a sufficiently necessary condition for a multipartite state to be separable or not, which is also the difficulty in the construction of $m$-resistant  states. But our method cleverly avoids this obstacle by restricting to the mixture of Dicke states, which makes the verification much easier.
   	\item We present a construction of $m$-resistant qudit states from error correcting  codes with large size.
   	This method serves as a complement to constructions of $m$-resistant states when $m \le N/2-1$, where previous methods often prove challenging.
   It also establishes a link between the construction of $m$-resistant states to the main coding problem.
   	As a result, any future progress on this coding problem will  lead to new $m$-resistant states.
   \end{itemize}

   The rest of this paper is organized as follows. In \autoref{sec:pre}, the preliminary knowledge and the notion of strong $m$-resistant are introduced. In \autoref{sec:Dicke}, we present a method to construct strong $m$-resistant pure states by mixtures of Dicke states.
   In \autoref{sec:code}, we construct $m$-resistant qudit states from codes with large size.
   Finally, we conclude in \autoref{sec:con}.

 \section{Preliminaries}\label{sec:pre}
 	In this section, we introduce some preliminary knowledge and facts.
 For a given set of indices $\mathcal{J}\subset [N]:=\{1, 2, \ldots, N\}$, let $\bar{\mathcal{J}}$ denote the complement of  $\mathcal{J}$ in $[N]$.

 A state is fully separable if it can be written as a convex combination
 of product states,
 \begin{equation}
 	\sum_{i}p_i\ketbra{a_i}\otimes\ketbra{b_i}\otimes \cdots \otimes \ketbra{z_i},
 \end{equation}
  where the $p_i$ are non-negative $(p_i\ge 0)$ and normalized $(\sum_{i}p_i=1)$ \cite{GUHNE2011406}. Otherwise, it is entangled.
 Consider a quantum  state $\ket{\psi}$ of $N$ particles.
    Take the partial
 trace over all particles in $\mathcal{J} \subset [N]$, and denote
 the reduced state
	\begin{equation}
	\rho_{\bar{\mathcal{J}}}(\psi)\triangleq \text{Tr}_{\mathcal{J}}(\ketbra{\psi}).
\end{equation}
An entangled state $\ket{\phi}$ is said to
be \emph{fragile} (resp.\ \emph{robust}) with respect to the loss of a given subset
$\mathcal{J}$ of the particles if $\rho_{\bar{\mathcal{J}}}(\phi)$ is fully separable (resp.\ entangled) \cite{PhysRevA.98.062335}. The concepts of entanglement robustness and fragility against particle loss are only meaningful for multipartite systems involving at least three particles. Therefore, we will assume hereafter that
$N\ge3$ for all $N$ particles under consideration.

As a  classification of the multipartite entanglement,  Quinta \emph{et al.} \cite{PhysRevA.100.062329} introduced the notion of $m$-resistant states.


\begin{definition} \cite{PhysRevA.100.062329}
	An entangled state $\ket{\phi}$ of $N$ particles is called \emph{$m$-resistant}  if it satisfies the following properties:
	\begin{itemize}
		\item [1)] $\ket{\phi}$ is robust with respect to the loss of any $m$ particles, i.e., $\rho_{\bar{\mathcal{J}}}(\phi)$ is entangled for any $\mathcal{J}\subset [N]$
		with
		$|\mathcal{J}|=m$.
		
		\item [2)] $\ket{\phi}$ is fragile with respect to the loss of any $m+1$ particles, i.e., $\rho_{\bar{\mathcal{J}}}(\phi)$ is fully separable for any $\mathcal{J}\subset [N]$ with $|\mathcal{J}|=m+1$.
	\end{itemize}
\end{definition}

The notion of $m$-resistant states  classifies the entanglement of multipartite states in terms of the difference in the topology of the entanglement \cite{PhysRevA.97.042307,PhysRevA.100.062329}. For \emph{mixed} qubit states,  it is easy to construct $m$-resistant mixed $N$-qubit states for any $m$ and $N$ \cite{PhysRevA.100.062329}. However, it is not the case for \emph{pure} qubit states. Quinta \emph{et al.} \cite{PhysRevA.100.062329} gave  constructions of $0$-, $(N-2)$-, $(N-3)$-resistant pure states of $N$ qubits. The first open case is the $1$-resistant pure states of five qubits, which they guess does not exist. They also gave constructions of $m$-resistant pure states with $N\ge 2(m+1)$ for some local dimension $d$ from a family of combinatorial designs called orthogonal arrays. They  conjectured that for any $N$ and $m$,  there  exists an $m$-resistant $N$-qudit state for some local dimension $d$, which is true for $m\leq (N-2)/2$
by the existence of orthogonal arrays.
\begin{conjecture}[\cite{PhysRevA.100.062329}]\label{conj}
  For any $N$ and $m$,  there  exists an $m$-resistant $N$-qudit state for some local dimension $d$.
\end{conjecture}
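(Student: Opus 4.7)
The plan is to divide the range of $m$ into two regimes and handle each with a different construction. For the small-$m$ regime $m \le (N-2)/2$, I would invoke the orthogonal array construction already known to Quinta \emph{et al.}: the existence of an $\OA$ of strength $m+1$ on $N$ factors over an alphabet of a suitable prime power size $d$ is classical, and such an array yields a pure $N$-qudit state whose reduction to any $N-(m+1)$ parties is the maximally mixed state (hence fully separable), while the reduction to any $N-m$ parties retains entanglement by the strength-$(m+1)$ property. This handles the half of the conjecture that the authors already credit as settled.

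For the large-$m$ regime $m>(N-2)/2$, the plan is to use the Dicke-state mixture construction developed in \autoref{sec:Dicke}. The idea is to search for states of the form
\begin{equation}
\ket{\phi}=\sum_{w\in S}\alpha_{w}\ket{D_{w}^{N}},
\end{equation}
where $\ket{D_{w}^{N}}$ is the $N$-qubit Dicke state of Hamming weight $w$ and $S\subseteq\{0,1,\dots,N\}$ and the coefficients $\alpha_{w}$ are to be chosen. The first key observation is that the partial trace of such a state over any $k$ qubits is again a mixture of Dicke states on $N-k$ qubits, with weights that are explicit polynomials in $\alpha_{w}$ and the binomial coefficients $\binom{N-k}{j}/\binom{N}{w}$. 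Separability at level $k=m+1$ and non-separability at level $k=m$ thus become finite algebraic conditions on the tuple $(\alpha_{w})_{w\in S}$.

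The main obstacle, as the authors emphasise, is the full-separability direction: proving that a multipartite mixed state is fully separable is hard in general. The crucial reduction here is that any mixture of Dicke states is permutationally symmetric, and a permutationally symmetric state is fully separable iff it admits a decomposition into symmetric product states $\ket{\theta}^{\otimes(N-m-1)}$. This converts the separability condition into a classical moment problem on the Bloch sphere, which can be attacked by Gram-matrix positivity or by exhibiting an explicit quasi-probability representation. I would first verify that the method produces strong $(N-k)$-resistant states for $k=2,3,4,5$ (recovering the known $k=2,3$ cases and the new $k=4,5$ cases), and then argue by an induction/lifting step that increasing $k$ only enlarges the feasible polytope of coefficient vectors, so that a strong $(N-k)$-resistant state exists for every $k\le N-2$.

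For the intermediate values of $m$ that neither approach addresses cleanly, the fallback is to allow the local dimension $d$ to grow and to invoke the code-theoretic construction of \autoref{sec:code}: an $[N,K,d]_q$ classical code of suitably large minimum distance, paired with its dual, furnishes a qudit state whose $(m+1)$-particle reductions coincide with the uniform mixture over a transitive code, hence are fully separable, while the $m$-particle reductions can be forced to remain entangled by ensuring that the dual distance falls between $m+1$ and $m+2$. The hard part of this step is producing codes with the exact distance profile required for every $(N,m)$ pair; since this reduces to the main coding problem, in a worst case one would appeal to random linear codes over a prime power $d$ growing with $N$ to prove mere existence, which is all the conjecture demands.
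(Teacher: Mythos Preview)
The paper does not prove Conjecture~\ref{conj}; it is stated there as an open problem. The text immediately preceding the conjecture says it ``is true for $m\leq (N-2)/2$ by the existence of orthogonal arrays,'' and the rest of the paper supplies only sporadic new instances (strong $(N-k)$-resistant qubit states for $k=3,4,5$, and some $(N-d)$-resistant qudit states from codes with $N\le 2d-2$). There is therefore no ``paper's own proof'' to compare against; your proposal is an attempted proof of a conjecture the authors leave open.

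On the substance of your plan, the decisive gap is the ``induction/lifting step'' for the Dicke regime. You assert that increasing $k$ ``only enlarges the feasible polytope of coefficient vectors,'' but this is exactly the opposite of what happens: to be a $k$-mixture Dicke state via Observation~\ref{ob2} one needs the nonzero weights in $S$ to be separated by more than $k$, so larger $k$ shrinks the admissible support; simultaneously the Hankel matrices $M_0,M_1$ of Lemma~\ref{sepM0M1} grow in size, and one must enforce positive semidefiniteness at level $k-1$ while forcing a negative eigenvalue at level $k$. The paper itself remarks that ``when $k$ becomes big, the choice of the coefficients of the linear combination is difficult and the verification of positivities of $M_0$ and $M_1$ will be more complicated,'' and stops at $k=5$ with $N\ge 13$. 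No monotonicity of the kind you invoke is available.

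Your code-theoretic fallback also does not close the gap. Corollary~\ref{codeconditionw} requires $N\le 2d-2$, hence produces only $m=N-d\le N/2-1$, which is already inside the range covered by orthogonal arrays; it lowers the local dimension in some cases but does not reach any new $m$. The ``exact distance profile'' and ``dual distance between $m+1$ and $m+2$'' conditions you sketch are not what the paper uses (there is no dual code in Theorem~\ref{them1}), and the appeal to random codes would need a precise statement guaranteeing $A_q(N,d)>q^{N-d}$ together with $N\le 2d-2$ for the relevant $(N,m)$, which you have not supplied. As written, the proposal covers $m\le (N-2)/2$ twice and the complementary range $m>(N-2)/2$ only for $m\in\{N-2,N-3,N-4,N-5\}$, leaving the conjecture open exactly where the paper leaves it.
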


As a global form of entanglement arising in multipartite systems, \emph{genuine multipartite entanglement} (GME) is a powerful resource in many applications \cite{PhysRevA.58.4394,steane1998quantum,PhysRevLett.109.233601}.
 An $N$-particle state  $\rho$
on Hilbert space $\mathcal{H}_{1}\otimes \cdots\otimes \mathcal{H}_{N}$
is genuinely entangled if it cannot be
decomposed into as
\begin{equation}\label{eqbs}
	\rho=\sum_{S|\bar{S}}q_{S|\bar{S}}\sum_{j} p_{S|\bar{S}}^{(j)} \rho_S^{(j)}\otimes \rho_{\bar{S}}^{(j)},
\end{equation}
 where the first sum goes over all bipartitions of $[N]$.
  If a state does admit such a decomposition, it is called
 biseparable \cite{PhysRevD.35.3066,PhysRevA.100.062318,PhysRevResearch.3.043120}.

Quantum states that can maintain robust GME have also attracted a lot of attention recently \cite{PhysRevResearch.3.043120,shi2024entanglement}.
From this perspective, we give \emph{strong} $m$-resistant notions to describe the genuine entanglement resistant to particle loss.

%
%

\begin{definition}
	A genuinely entangled state $\ket{\phi}$ is called \emph{strong $m$-resistant} if it satisfies the following properties:
	\begin{itemize}
		\item [1)] $\rho_{\bar{\mathcal{J}}}(\phi)$ is genuinely entangled for any $\mathcal{J}\subset [N]$ with $|\mathcal{J}|=m$.
		
		\item [2)]  $\rho_{\bar{\mathcal{J}}}(\phi)$ is fully separable for any $\mathcal{J}\subset [N]$ with $|\mathcal{J}|=m+1$.
	\end{itemize}
\end{definition}

It is easy to observe that if an entangled state is strong $m$-resistant then it must be $m$-resistant. Indeed, the constructions of $0$-, $(N-2)$-, $(N-3)$-resistant states given by Quinta \emph{et al.} \cite{PhysRevA.100.062329} are also strong $m$-resistant. In the next section, we present a way to  find more  strong $m$-resistant entangled states, which are of course also $m$-resistant.

\section{Constructions from mixture of Dicke states}\label{sec:Dicke}

In this section, we  provide a general idea of  constructing $m$-resistant pure qubit states  by symmetric states. The reason of using symmetric states is because both entanglement and separability are required for any partial trace, and symmetric states will provide much convenience for verification.
 Dicke states are a class of symmetric states.
 Yu \cite{PhysRevA.94.060101} gave a sufficiently necessary condition for any mixture of Dicke states to be separable, which is the key in our construction.

 We denote a Dicke state as
 \begin{equation}
 	\ket{D^N_m}=\frac{1}{\sqrt{\binom{N}{m}}}\sum_{P}P\{\ket{1}^{\otimes m}\ket{0}^{\otimes (N-m)}\}
 \end{equation}
 with the sum going through all permutations $P$ over $[N]$ \cite{dicke1954coherence}.

 For any mixture of Dicke states $\rho=\sum_{i=0}^{k} b_i \ketbra{D^k_i}{D^k_i}$, we define Hankel matrices \cite{partington1988introduction} $M_0$ and $M_1$ as
 \begin{equation}
 	M_0(\rho):=\left(
 	\begin{array}{lll}
 		a_0 & \cdots & a_{m_0} \\
 		\ \vdots & \ddots & \ \vdots \\
 		a_{m_0} & \cdots & a_{2m_0} \\
 	\end{array}\right)
 	\ \text{and} \
 	M_1(\rho):=\left(
 	\begin{array}{lll}
 		a_1 & \cdots & a_{m_1} \\
 		\ \vdots & \ddots & \ \vdots \\
 		a_{m_1} & \cdots & a_{2m_1-1} \\
 	\end{array}\right),
 \end{equation}
 where $m_0 : = \lfloor \frac{k}{2} \rfloor$ and $m_1 : = \lfloor \frac{k+1}{2} \rfloor$ and $a_i:=b_i\binom{k}{i}^{-1}$.
 \begin{lemma}[\cite{PhysRevA.94.060101,partington1988introduction}]\label{sepM0M1}
 	The mixture of Dicke states $\rho=\sum_{i=0}^{k} b_i \ketbra{D^k_i}{D^k_i}$ is separable if and only if $M_0(\rho)$ and $M_1(\rho)$ are both positive semidefinite.
 \end{lemma}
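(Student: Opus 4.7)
The plan is to reduce separability of the diagonal-in-Dicke state $\rho=\sum_{i=0}^{k}b_i\ketbra{D^k_i}{D^k_i}$ to a truncated moment problem on $[0,1]$, and then to use the Markov--Lukacs decomposition of nonnegative polynomials together with the clean product identity $q_{i,m}q_{j,m}=q_{i+j,2m}$ for the pre-normalized Bernstein polynomials $q_{i,k}(t):=t^i(1-t)^{k-i}$ to match the resulting positivity conditions precisely with those on $M_0(\rho)$ and $M_1(\rho)$.

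First I would exploit symmetry. Any fully separable state lying in the symmetric subspace of $(\mathbb{C}^2)^{\otimes k}$ admits a decomposition $\rho=\int \ketbra{\phi}{\phi}^{\otimes k}\,d\nu(\phi)$ over single-qubit pure states $\ket{\phi}=\alpha\ket{0}+\beta\ket{1}$, a standard fact about symmetric separable states. Since $\rho$ is diagonal in the Dicke basis, averaging $\nu$ over the $U(1)$ relative phase of $\beta$ annihilates all off-diagonal Dicke entries without changing $\rho$, and leaves the one-parameter identity
\begin{equation*}
a_i\;=\;b_i\binom{k}{i}^{-1}\;=\;\int_0^1 t^i(1-t)^{k-i}\,d\mu(t),\qquad i=0,\ldots,k,
\end{equation*}
for some probability measure $\mu$ on $[0,1]$ (with $t=|\beta|^2$). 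Conversely, from any such $\mu$ one builds a separable decomposition of $\rho$ explicitly via $\ket{\phi(t,\theta)}=\sqrt{1-t}\ket{0}+e^{i\theta}\sqrt{t}\ket{1}$ integrated uniformly in $\theta$. Hence separability of $\rho$ is equivalent to the existence of such a representing measure $\mu$.

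Next I would frame this as a truncated $K$-moment problem. Define $L:\mathbb{R}[t]_{\le k}\to\mathbb{R}$ by $L(q_{i,k})=a_i$; this is well defined because $\{q_{0,k},\ldots,q_{k,k}\}$ is a basis. The classical Richter--Haviland theorem for the compact set $K=[0,1]$ asserts that $L$ has a representing positive measure iff $L(f)\ge 0$ for every polynomial $f$ of degree at most $k$ that is nonnegative on $[0,1]$. By Markov--Lukacs, any such $f$ admits
\begin{equation*}
f=\begin{cases} g^2+t(1-t)\,h^2, & k=2m,\ \deg g\le m,\ \deg h\le m-1,\\ t\,g^2+(1-t)\,h^2, & k=2m+1,\ \deg g,\deg h\le m,\end{cases}
\end{equation*}
reducing the condition to $L(g^2),L(t(1-t)h^2)\ge 0$ (resp.\ $L(tg^2),L((1-t)h^2)\ge 0$) for all admissible $g,h$. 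The final step is to compute these bilinear forms using $q_{i,m}q_{j,m}=q_{i+j,2m}$: writing $g=\sum_i \delta_i q_{i,m}$ in the degree-$m$ Bernstein basis gives $g^2=\sum_{i,j}\delta_i\delta_j\,q_{i+j,2m}$, and multiplication by the weights $t,\,1-t,\,t(1-t)$ shifts the index or total degree cleanly (e.g.\ $t\cdot q_{i+j,2m}=q_{i+j+1,2m+1}$). Applying $L$ case by case and using $L(q_{\ell,k})=a_\ell$ yields exactly $\delta^{T} M_0\delta$ or $\delta^{T} M_1\delta$, with the matrix sizes $(m_0+1)\times(m_0+1)$ and $m_1\times m_1$ matching the definitions of $M_0,M_1$ in both parities of $k$. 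So $M_0,M_1\succeq 0$ is equivalent to $L\ge 0$ on $[0,1]$-nonnegative polynomials, equivalent to existence of $\mu$, equivalent to separability of $\rho$.

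The main obstacle I anticipate is the dimensional bookkeeping: one must verify, separately for $k=2m$ and $k=2m+1$, that $m_0=\lfloor k/2\rfloor$ and $m_1=\lfloor(k+1)/2\rfloor$ coincide exactly with the degrees of the $g,h$ allowed by Markov--Lukacs, so that the two PSD conditions precisely capture $L\ge 0$ with neither slack nor redundancy. A secondary subtlety is the explicit justification that a fully separable symmetric two-level state admits a symmetric product-state decomposition; this is standard but would need to be invoked carefully. Once both ingredients are in place, the equivalence becomes an algebraic identification.
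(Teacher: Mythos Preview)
The paper does not give its own proof of this lemma: it is quoted as a known result from \cite{PhysRevA.94.060101,partington1988introduction} and used as a black box throughout \autoref{sec:Dicke}. So there is no in-paper argument to compare against.

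That said, your plan is correct and is essentially the argument behind the cited references. Yu \cite{PhysRevA.94.060101} reduces separability of a diagonal-in-Dicke qubit state to the classical Hausdorff moment problem on $[0,1]$, exactly via the identification $a_i=\int_0^1 t^i(1-t)^{k-i}\,d\mu(t)$ you describe, and the Hankel-matrix characterization of solvability is then classical \cite{partington1988introduction}. Your use of the Bernstein basis and the identity $q_{i,m}q_{j,m}=q_{i+j,2m}$ makes the translation to $M_0,M_1$ particularly transparent, and your parity check is right: for $k=2m$ the pair $(g^2,\,t(1-t)h^2)$ yields exactly the $(m_0{+}1)\times(m_0{+}1)$ and $m_1\times m_1$ Hankel forms, while for $k=2m+1$ the pair $((1-t)h^2,\,tg^2)$ yields the $(m_0{+}1)\times(m_0{+}1)$ and $(m_1)\times(m_1)$ forms with $m_1=m+1$. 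The two subtleties you flag are the right ones; both are standard (the first is the Markov--Lukacs degree count, the second is the fact that a fully separable state supported on the symmetric subspace admits a decomposition into symmetric product states \cite{ECKERT200288}), and once invoked the proof is complete.
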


In order to  construct an $m$-resistant pure state, we need to construct an entangled pure state that becomes a mixture of Dicke states after the loss of any $m$ particles, so that we can use Lemma~\ref{sepM0M1} to simultaneously determine that the reduction to $N-m-1$ particles is separable while the reduction to $N-m$ particles is entangled.

\begin{definition}\label{k-mixture}
	An $N$-qubit entangled pure state $\ket{\phi}$ is called a \emph{$k$-mixture Dicke state} if it satisfies the following properties:
	\begin{itemize}
		\item [1)]
	$\ket{\phi}$ is a linear combination of Dicke states, i.e.,  $\ket{\phi}=\sum_{i=0}^{N}c_i \ket{D^N_i}$.
		\item [2)] The reduction of $\ket{\phi}$ to any $k$ particles is a mixture of Dicke states, i.e., $\rho_{\mathcal{J}}(\phi)=\sum_{i=0}^{k} a_i \ketbra{D^k_i}{D^k_i}$ for any $|\mathcal{J}|=k$.
	\end{itemize}
\end{definition}

Note that the condition 2) of Definition~\ref{k-mixture} is satisfied if the reduction to \emph{some} $k$ particles is a mixture of Dicke states, since any $k$-reduction of a Dicke state is identical. So we just need to make sure that $\rho_{[k]}(\phi)$ is a mixture of Dicke states. If one state is a $k$-mixture Dicke state, then it is a $(k-1)$-mixture Dicke state.
Recall that symmetric states are either fully separable or GME \cite{ECKERT200288,i2016characterizing}.
Since the reduction of a Dicke state remains a symmetric state \cite{i2016characterizing}, this means if a reduction of a Dicke state is entangled, it must be GME.
Then we have the following observation by Lemma~\ref{sepM0M1}.

\begin{observation}\label{kmix} Let $N=k+m$.
	A $k$-mixture Dicke state $\ket{\phi}$ is strong $m$-resistant if and only if it satisfies the following properties:
		\begin{itemize}
		\item [1)]
		$\rho_{{[k]}}(\phi)$ is entangled, i.e., one of $M_0(\rho_{{[k]}}(\phi))$ and $M_1(\rho_{{[k]}}(\phi))$ is not positive semidefinite.
		\item [2)]
		$\rho_{{[k-1]}}(\phi)$ is separable, i.e., both
		 $M_0(\rho_{{[k-1]}}(\phi))$ and $M_1(\rho_{{[k-1]}}(\phi))$ are positive semidefinite.

	\end{itemize}
\end{observation}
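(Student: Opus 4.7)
The plan is to combine three ingredients already available in the paper: the full permutation symmetry of any $k$-mixture Dicke state, Yu's separability criterion (\autoref{sepM0M1}), and the dichotomy that a permutation-symmetric state is either fully separable or genuinely multipartite entangled. Because the observation is essentially a bookkeeping statement that packages these three facts for the specific reductions appearing in the strong $m$-resistance definition, the argument should split cleanly into an ``all reductions of a given size are equal'' step and a direct application of \autoref{sepM0M1} to the two relevant reductions.

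First I would observe that since $\ket{\phi}=\sum_{i=0}^{N}c_i\ket{D_i^N}$ is permutation invariant, the reduced state $\rho_{\bar{\mathcal{J}}}(\phi)$ depends only on $|\mathcal{J}|$. Hence every loss of $m$ particles yields the same state $\rho_{[k]}(\phi)$, and every loss of $m+1$ particles yields the same state $\rho_{[k-1]}(\phi)$. Moreover, tracing out one additional qubit from a mixture of Dicke states on $k$ qubits produces a mixture of Dicke states on $k-1$ qubits, so $\rho_{[k-1]}(\phi)$ is again a mixture of Dicke states, and both reductions are accessible to \autoref{sepM0M1}. This collapses the two universally quantified conditions in the definition of strong $m$-resistance to conditions on the two fixed states $\rho_{[k]}(\phi)$ and $\rho_{[k-1]}(\phi)$.

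Next I would apply \autoref{sepM0M1} to each of these two mixtures of Dicke states: full separability of $\rho_{[k-1]}(\phi)$ is equivalent to $M_0(\rho_{[k-1]}(\phi))$ and $M_1(\rho_{[k-1]}(\phi))$ both being positive semidefinite, giving condition 2); and (non-)separability of $\rho_{[k]}(\phi)$ is equivalent to at least one of $M_0(\rho_{[k]}(\phi)),M_1(\rho_{[k]}(\phi))$ failing to be positive semidefinite, giving condition 1). The only subtlety, and what I expect to be the most delicate point to state carefully rather than the most technically hard, is promoting ``entangled'' to ``genuinely entangled'' as required by strong $m$-resistance: because $\rho_{[k]}(\phi)$ is a convex combination of symmetric pure states it is itself permutation symmetric, and the symmetric-state dichotomy then forces any entangled such state to be GME. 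The same dichotomy, applied to $\ket{\phi}$ itself (symmetric and nontrivially entangled by assumption in \autoref{k-mixture}), ensures $\ket{\phi}$ is genuinely entangled, completing both directions of the equivalence.
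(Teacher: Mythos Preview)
Your proposal is correct and follows essentially the same approach as the paper: the paper's justification (given in the paragraph immediately preceding the observation) invokes exactly the symmetric-state dichotomy to upgrade entanglement to GME, the fact that reductions of Dicke states remain symmetric, and \autoref{sepM0M1}. Your write-up is simply a more explicit unpacking of these same three ingredients, including the observation that all reductions of a fixed size coincide and that $\ket{\phi}$ itself is GME.
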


In this way, we can construct a strong $m$-resistant pure state by preparing the appropriate $k$-mixture Dicke state.

Now we discuss the condition that a combination of Dicke states $\ket{\phi}=\sum_{i} c_i \ket{D^N_i}$ is a  $k$-mixture Dicke state, that is,
 any reduction of $\ket{\phi}$ to $k$ particles is a mixture of Dicke states. 
Since 
\begin{equation}
	\ketbra{\phi}{\phi}=\sum_{i,j}c_ic_j^*\ketbra{D_i^N}{D_j^N},
\end{equation}
 we have
\begin{equation}
\rho_{[k]}(\phi)=\Tr_{[N]\setminus[k]}(\ketbra{\phi}{\phi})=\sum_{i,j} c_ic_j^* \Tr_{[N]\setminus[k]} (\ketbra{D_i^N}{D_j^N})
=\sum_{i} c_ic_i^*\Tr_{[N]\setminus[k]}(\ketbra{D_i^N}{D_i^N})+\sum_{i\neq j} c_ic_j^*\Tr_{[N]\setminus[k] }(\ketbra{D_i^N}{D_j^N}).
\end{equation}

If $ c_ic_j^*\Tr_{[N]\setminus[k] }(\ketbra{D_i^N}{D_j^N})=0$ for any $i\neq j$, then
  $\rho_{[k]}(\phi)$ is a mixture of Dicke states. Thus, we have the following observation.

%

\begin{observation}\label{ob2} The combination
$\ket{\phi}=\sum_{i} c_i \ket{D^N_i}$ is a $k$-mixture Dicke state if any two nonzero coefficients $c_i$ and $c_j$  satisfy $|i-j|>k$. Specifically, we have
\begin{equation}
	\rho_{[k]}(\phi)=\Tr_{[N]\setminus[k]} (\ketbra{\phi}{\phi})=\sum_{i} c_ic_i^*\Tr_{[N]\setminus[k]} \ketbra{D^N_i}{D^N_i}.
\end{equation}
\end{observation}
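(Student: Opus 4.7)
The plan is to show that the stated coefficient condition forces every off-diagonal cross term in the partial trace to vanish; the preceding discussion in the excerpt then immediately yields the displayed formula and the $k$-mixture property. Concretely, the central step is to prove that $\Tr_{[N]\setminus[k]}\ketbra{D^N_i}{D^N_j}=0$ whenever $|i-j|>k$.

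To set up, I would expand each Dicke state in the computational basis, writing $\ket{D^N_i}=\binom{N}{i}^{-1/2}\sum_{|S|=i}\ket{e_S}$, where $\ket{e_S}$ is the computational basis vector on $N$ qubits with support exactly $S\subseteq[N]$, and likewise for $\ket{D^N_j}$. Taking the outer product gives
\begin{equation*}
\ketbra{D^N_i}{D^N_j}=\binom{N}{i}^{-1/2}\binom{N}{j}^{-1/2}\sum_{|S|=i,\,|T|=j}\ketbra{e_S}{e_T}.
\end{equation*}
Splitting each basis vector along the cut $[k]\sqcup([N]\setminus[k])$, the partial trace $\Tr_{[N]\setminus[k]}\ketbra{e_S}{e_T}$ contributes $\ketbra{e_{S\cap[k]}}{e_{T\cap[k]}}$ precisely when $S\cap([N]\setminus[k])=T\cap([N]\setminus[k])$, and vanishes otherwise. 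Denoting that common intersection by $A$, the constraints $|S|=i$ and $|T|=j$ translate into $|S\cap[k]|=i-|A|$ and $|T\cap[k]|=j-|A|$.

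This last observation is the engine of the proof: since $S\cap[k]$ and $T\cap[k]$ are both subsets of $[k]$, their sizes lie in $\{0,1,\dots,k\}$, so $|(i-|A|)-(j-|A|)|=|i-j|\le k$ is a necessary condition for any surviving pair $(S,T)$. Contrapositively, $|i-j|>k$ forces no such pair to exist, whence $\Tr_{[N]\setminus[k]}\ketbra{D^N_i}{D^N_j}=0$.

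To finish, I would plug this back into the expansion of $\rho_{[k]}(\phi)$ already displayed before the observation. Under the hypothesis that every pair of nonzero coefficients $c_i,c_j$ satisfies $|i-j|>k$, each off-diagonal product $c_ic_j^*\Tr_{[N]\setminus[k]}\ketbra{D^N_i}{D^N_j}$ vanishes, leaving exactly the stated identity for $\rho_{[k]}(\phi)$. Since each diagonal reduction $\Tr_{[N]\setminus[k]}\ketbra{D^N_i}{D^N_i}$ is itself a mixture of Dicke states on $k$ qubits (with hypergeometric-type weights, a standard fact about Dicke states), $\rho_{[k]}(\phi)$ is a mixture of Dicke states, confirming that $\ket{\phi}$ is a $k$-mixture Dicke state. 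The main obstacle is purely combinatorial bookkeeping in the cross-term analysis: one must track carefully that the matching-of-traced-blocks condition pins down $|S\cap[k]|-|T\cap[k]|=i-j$, which is then bounded in magnitude by $k$.
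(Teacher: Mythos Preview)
Your proposal is correct and follows essentially the same approach as the paper: both arguments show that $\Tr_{[N]\setminus[k]}\ketbra{D^N_i}{D^N_j}=0$ when $|i-j|>k$ by observing that no pair of computational basis vectors from $\ket{D^N_i}$ and $\ket{D^N_j}$ can agree on all $N-k$ traced-out coordinates. The paper phrases this via a Hamming-distance count (two strings of weights $i$ and $j$ agree on at most $N-|i-j|<N-k$ positions), while you phrase it by bounding $|S\cap[k]|-|T\cap[k]|$; these are two ways of reading off the same inequality, and your version supplies the bookkeeping the paper leaves implicit.
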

\begin{proof}
 Since $|i-j|>k$, any state term $\ket{i_1i_2\cdots i_N}$ in $\ket{D^N_i}$ and any state term $\ket{j_1j_2\cdots j_N}$ in $\ket{D^N_j}$ intersect on at most $N-k-1$ particles (i.e., have the same particle values). Then tracing $N-k$ particles results a zero coefficient, that is, $ \Tr_{[N]\setminus[k] } (\ketbra{D_i^N}{D_j^N})=0$.
%
\end{proof}

In the following, we apply this method to construct strong $m$-resistant qubit states for $m=N-3,N-4$ and $N-5$.

\subsection{A construction of strong $(N-3)$-resistant qubit states}

 In \cite{PhysRevA.100.062329}, the authors proved that the state $\sqrt{\binom{N}{3}}\ket{D^N_0}+\ket{D^N_{3}}$ is $(N-3)$-resistant, whose verification is a bit complicated. In this section, we  show that all states possessing the form
\begin{equation}\label{psiN}
	\ket{\psi^{N-3}}=a\ket{D^N_0}+b\ket{D^N_{N-3}}
\end{equation}
with  $a^2\ge \frac{6(2N-5)}{N(N-1)(N-4)}b^2$
are strong $(N-3)$-resistant when $N\ge 7$.

By Observation \ref{ob2}, $\ket{\psi^{N-3}}$ is a $3$-mixture Dicke state when $N\ge 7$. Recalling that
\begin{equation}
	\rho_{[k]}(D^N_i)=\text{Tr}_{[N]\setminus[k]} \ketbra{D^N_i}{D^N_i}= \sum_{s=0}^{k} \frac{\binom{k}{s}\binom{N-k}{i-s}}{\binom{N}{i}} \ketbra{D^k_s}{D^k_s},
\end{equation}
 we have that any $2$-reduction of $\ket{\psi^{N-3}}$ can be written as

\begin{equation}
		\begin{aligned}
\rho_{[2]}(\psi^{N-3})&=
	 a^2\text{Tr}_{[N]\setminus[2]} \ketbra{D^N_0}{D^N_0}+b^2\text{Tr}_{[N]\setminus[2]} \ketbra{D^N_{N-3}}{D^N_{N-3}}
	\\&= \left(a^2+\frac{(N-2)b^2}{\binom{N}{3}}\right) \ketbra{D^2_0}+ \frac{2\binom{N-2}{2}}{\binom{N}{3}}b^2\ketbra{D^2_1} +
	\frac{\binom{N-2}{3}}{\binom{N}{3} }b^2\ketbra{D^2_2}.
	\end{aligned}
\end{equation}
%
It can be obtained that $\rho_{[2]}(\psi^{N-3})$ 
is separable since the two Hankel matrices  $M_0\propto  \left(\begin{smallmatrix}
\binom{N}{3}a^2+(N-2)b^2 & \binom{N-2}{2}b^2 \\
\binom{N-2}{2}b^2 & \binom{N-2}{3}b^2
\end{smallmatrix}\right)$ and $M_1=\left(\begin{smallmatrix}
3 \frac{N-3}{N(N-1)}b^2
\end{smallmatrix}
\right)$
are both positive semidefinite by the fact of $a^2\ge \frac{6(2N-5)}{N(N-1)(N-4)}b^2$.

Meanwhile, any $3$-reduction of $\ket{\psi^{N-3}}$ can be written as
\begin{equation}
	\begin{aligned}
\rho_{[3]}(\psi^{N-3})&=
		a^2\text{Tr}_{[N]\setminus[3]} \ketbra{D^N_0}{D^N_0}+b^2\text{Tr}_{[N]\setminus[3]} \ketbra{D^N_{N-3}}{D^N_{N-3}}
		\\&= \left(a^2+\frac{ b^2}{\binom{N}{3}}\right) \ketbra{D^3_0}+ \frac{3(N-3)}{\binom{N}{3}}b^2\ketbra{D^3_1}
		 +
		\frac{3\binom{N-3}{2}}{\binom{N}{3} }b^2\ketbra{D^3_2}+\frac{\binom{N-3}{3}}{\binom{N}{3}}b^2 \ketbra{D^3_{3}}.
%
	\end{aligned}
\end{equation}
%
Then, the Hankel matrices of any $3$-reduction of $\ket{\psi^{N-3}}$ can be written as
\begin{equation}
	M_0
\propto
	\left(
	\begin{array}{ll}
\binom{N}{3}a^2+b^2 & (N-3)b^2 \\
(N-3)b^2  &  \binom{N-3}{2}b^2 \\
	\end{array}\right)
	\ \text{and} \
	M_1\propto \left(
	\begin{array}{ll}
		N-3 &  \binom{N-3}{2} \\
		\binom{N-3}{2} &  \binom{N-3}{3} \\
	\end{array}\right).
\end{equation}
It can be checked that $M_1$ is not positive semidefinite, which means that any $3$-reduction of $\ket{\psi^{N-3}}$ is entangled.

Therefore, we can conclude that $\ket{\psi^{N-3}}$ is strong $(N-3)$-resistant for $N\geq 7$ by Observation~\ref{kmix}.

\subsection{A construction of strong $(N-4)$-resistant qubit states}
In this section, we will show that
\begin{equation}\label{psiN}
	\ket{\psi^{N-4}}=a\ket{D^N_1}+b\ket{D^N_{N-1}}
\end{equation}
 with $a, b \neq 0$ is strong $(N-4)$-resistant for $N\geq 7$. By Observation~\ref{ob2},  $\ket{\psi^{N-4}}$ is a $4$-mixture Dicke state when $N\ge 7$. Computing
\begin{equation}
		\begin{aligned}
\rho_{[3]}(\psi^{N-4})&=
	a^2\text{Tr}_{[N]\setminus[3]} \ketbra{D^N_1}{D^N_1}+b^2\text{Tr}_{[N]\setminus[3]} \ketbra{D^N_{N-1}}{D^N_{N-1}}
	\\&= a^2\left(\frac{N-3}{N} \ketbra {D^3_0}{D^3_0} + \frac{3}{N} \ketbra{D^3_1}{D^3_1})+b^2(\frac{3}{N} \ketbra {D^3_2}{D^3_2} + \frac{N-3}{N} \ketbra{D^3_3}{D^3_3}\right),
	\end{aligned}
\end{equation}
we have that the two Hankel matrices of  $\rho_{[3]}(\psi^{N-4})$ are
\begin{equation}
	M_0
	\propto
	\left(
	\begin{array}{ll}
		N-3 & 1 \\
		\	1 & 1 \\
	\end{array}\right)
	\ \text{and} \
	M_1\propto \left(
	\begin{array}{ll}
		1 &  1 \\
		1 &  N-3 \\
	\end{array}\right).
\end{equation}
By  Lemma~\autoref{sepM0M1}, we conclude that any $3$-reduction of $\ket{\psi^{N-4}}$ is separable.
Meanwhile, we have
%
\begin{equation}
\rho_{[4]}(\psi^{N-4})=a^2\left(\frac{N-4}{N} \ketbra {D^4_0}{D^4_0} + \frac{4}{N} \ketbra{D^4_1}{D^4_1})+ b^2 (\frac{4}{N}\ketbra {D^4_3}{D^4_3} + \frac{N-4}{N} \ketbra{D^4_4}{D^4_4}\right).
\end{equation}
Then, the Hankel matrices of any $4$-reduction of $\ket{\psi^{N-4}}$ can be written as
\begin{equation}
M_0
\propto
\left(
\begin{array}{lll}
	(N-4)a^2 & a^2 & 0\\
	a^2 & 0 & b^2 \\
	0 & b^2 & (N-4)b^2 \\
\end{array}\right)
\ \text{and} \
M_1\propto \left(
\begin{array}{lll}
	a^2 &  0 \\
	0 &  b^2 \\
\end{array}\right).
\end{equation}
It can be checked that $M_0$ is not positive semidefinite, which means that any $4$-reduction of $\ket{\psi^{N-4}}$ is entangled. Then, we can conclude that $\ket{\psi^{N-4}}$ is strong $(N-4)$-resistant when $N\geq 7$ by Observation~\ref{kmix}.

\vspace{0.3cm}

Note that, by using the mixture of Dicke states, we may construct more $(N-k)$-resistant states of $N$ qubits with a constant $k\geq 5$. However, when $k$ becomes big, the choice of the coefficients of the linear combination is difficult and the varication of positivities of $M_0$ and $M_1$ will be more complicated.

For example, we can also use the mixture of Dicke states of the form $\ket{\psi^{N-5}}=a\ket{D^N_1}+b\ket{D^N_{7}}+c\ket{D^N_{N}}$ for certain $a,b,c$ to construct $(N-5)$-resistant states for $N \ge 13$. See the construction and proof  in Appendix \ref{appendix1}.

Finally, we list known existence of strong $m$-resistant states of $N$ qubits for small $N$ in Table~\ref{tabled2}.  

\begin{table}[h!]
	\caption{Existence of strong $m$-resistant states of $N$ qubits. Here, LCD means the known state is a mixture of Dicke states, while subscript $a$ means the construction is from \cite{PhysRevA.100.062329}, and subscript $b$ means the construction is from  Section III. AME means the state is an absolutely maximally entangled (AME) state of six-qubit from \cite{PhysRevA.100.062329}.}\label{tabled2}
	\centering
	\begin{tabular}{c|ccccccc}
		\toprule[1pt]
		\diagbox{$m$}{$N$} & 4 & 5 & 6 & 7 & 8 & 9&10 \\[5pt]
		\midrule[0.5pt]
		0 & LCD$_a$ & LCD$_a$&  LCD$_a$ & LCD$_a$ & LCD$_a$ & LCD$_a$ & LCD$_a$\\ [5pt]
		1 & LCD$_a$ &  &  &  &  &  &\\ [5pt]
		2 & LCD$_a$ & LCD$_a$ & AME &  &  &  &\\ [5pt]
		3 &  & LCD$_a$ & LCD$_a$ & LCD$_b$ &  & & \\ [5pt]
		4 &  &  & LCD$_a$ & LCD$_a$ &  LCD$_b$ & &\\ [5pt]
		
		5 &  &  &  & LCD$_a$ & LCD$_a$ & LCD$_b$ \\ [5pt]
		
		6 &  &  &  &  & LCD$_a$ & LCD$_a$ & LCD$_b$\\
		\bottomrule[1pt]
	\end{tabular}
\end{table}

%
%

\section{Constructions from Codes}\label{sec:code}
 In this section, we establish a close connection between combinatorial arrays
and $m$-resistant pure states with high local dimension.
Using this connection we give some new constructions of $m$-resistant states. Note that these constructions are not necessarily strong $m$-resistant.


\begin{definition} An $r \times N$ array $A$ with entries from the set $S = \{0, 1, \ldots q-1 \}$ is called a \emph{$k$-critical array}, denoted by CA$(r,N,q,k)$, if
	\begin{itemize}
		\item [(i)] for any $r \times k$ sub-array, each $k$-tuple in $S^k$ appears at most once as a row;		
		\item [(ii)]  for any $r \times (k-1)$ sub-array, there exists at least one $(k-1)$-tuple in $S^{k-1}$ that appears twice as a row.
	\end{itemize}
\end{definition}

For example,  the following array is a CA$(10,7,3,3)$.
$$\begin{array}{ccccccc}
	0 & 0 & 0 & 0 & 0 & 0 &0 \\
	0 & 0 & 1 & 2 & 2 & 1 & 1 \\
	0 & 1 & 2 & 0 & 1 & 2 & 1 \\
	0 & 2 & 0 & 1 & 1 & 1 & 2 \\
	1 & 0 & 1 & 1 & 1 & 2 & 0 \\
	1 & 1 & 0 & 1 & 2 & 0 & 1 \\
	1 & 1 & 1 & 0 & 0 & 1 & 2 \\
	2 & 0 & 2 & 2 & 1 & 0 & 2 \\
	2 & 2 & 0 & 2 & 0 & 2 & 1 \\
	2 & 2 & 2 & 0 & 2 & 1 &0 \\
\end{array}
$$

By definition, in an CA$(r,N,q,k)$, we have $r\leq q^k$. The well-known orthogonal array with unitary index in combinatorial design theory \cite{bush1952orthogonal}, denoted by OA$(r,N,q,k)$, is a special $k$-critical array with $r=q^k$, such that each $k$-tuple appears exactly once in (i), and hence all $(k-1)$-tuples appears $q$ times in (ii). In \cite{PhysRevA.100.062329},  Quinta \emph{et al.} showed that if an array $A=(a_{ij})_{1\leq i\leq r;1\leq j\leq N}$ is an  OA$(r,N,q,k)$ with $r=q^k$ and $N\geq 2k$, then the state
 \begin{equation}
 	\ket{A}=\sum_{i=1}^r  \ket{a_{i1}a_{i2}\cdots a_{iN}},
 \end{equation}
 is  a $(k-1)$-resistent state.   That is, each row in the array corresponds to exactly one nonzero term in $\ket{A}$.

Now we show a more general construction in the following theorem: a $k$-critical array creates many $(k-1)$-resistent states. For example, the above CA$(10,7,3,3)$ gives a state
 \begin{equation}
	\begin{aligned}
		\ket{A} = & c_1 \ket{	0  0  0  0  0  0 0}+c_2 \ket{0  0  1  2  2  1  1 }+c_3 \ket{0  1  2  0  1  2  1} + c_4 \ket{0  2  0  1  1  1  2} +c_5 \ket{ 1  0  1  1  1  2  0} \\ & + c_6 \ket{1  1  0  1  2  0  1}
		+ c_7 \ket{1  1  1  0  0  1  2} +c_8 \ket{2  0  2  2  1  0  2} + c_9 \ket{2  2  0  2  0  2  1} + c_{10} \ket{2  2  2  0  2  1 0},
	\end{aligned}	
\end{equation} which is  $2$-resistent of $7$ particles with local dimension $3$ for any nonzero $c_i\in \mathbb{C}$. Previously in \cite{PhysRevA.100.062329}, a construction of $2$-resistant state of $7$ particles requires the local dimension at least $7$.

\begin{theorem}\label{them1}
If $A=(a_{ij})_{1\leq i\leq r;1\leq j\leq N}$ is a CA$(r,N,q,k)$ with $N\geq 2k$, then
$\ket{A}=\sum_{i=1}^r  c_i \ket{a_{i1}a_{i2}\cdots a_{iN}}$ is $(k-1)$-resistant for any $0\neq c_i\in \mathbb{C}$.
\end{theorem}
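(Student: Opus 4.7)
The plan is to check the two defining properties of $(k-1)$-resistance separately. Writing out the partial trace in the computational basis, for any $\mathcal{J}\subset[N]$ one has
\[
\rho_{\bar{\mathcal{J}}}(A)=\sum_{i,j:\,a_{i,\mathcal{J}}=a_{j,\mathcal{J}}}c_i c_j^*\,\ketbra{a_{i,\bar{\mathcal{J}}}}{a_{j,\bar{\mathcal{J}}}},
\]
because $\braket{a_{j,\mathcal{J}}}{a_{i,\mathcal{J}}}$ is the indicator that the two rows coincide on $\mathcal{J}$. Everything else reduces to combinatorial bookkeeping on the array.

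For $|\mathcal{J}|=k$, property (i) forces $a_{i,\mathcal{J}}=a_{j,\mathcal{J}}$ to imply $i=j$, so $\rho_{\bar{\mathcal{J}}}(A)=\sum_i|c_i|^2\,\ketbra{a_{i,\bar{\mathcal{J}}}}{a_{i,\bar{\mathcal{J}}}}$ is a convex combination of product computational-basis states, hence fully separable. This handles the ``fragile'' half of the definition.

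For the ``robust'' half, fix $\mathcal{J}$ with $|\mathcal{J}|=k-1$. Property (ii) supplies two rows $i_1\neq i_2$ with $a_{i_1,\mathcal{J}}=a_{i_2,\mathcal{J}}$. Pick any particle $p\in\bar{\mathcal{J}}$ and abbreviate $\alpha=a_{i_1,p}$, $\beta=a_{i_2,p}$, $u=a_{i_1,\bar{\mathcal{J}}\setminus\{p\}}$, $v=a_{i_2,\bar{\mathcal{J}}\setminus\{p\}}$; property (i) applied to $\mathcal{J}\cup\{p\}$ forces $\alpha\neq\beta$. I would invoke the PPT criterion across the bipartition $\{p\}\,|\,\bar{\mathcal{J}}\setminus\{p\}$ and focus on the $2\times 2$ principal submatrix of $\rho_{\bar{\mathcal{J}}}^{T_p}$ indexed by the two ``swapped'' basis states $\ket{\alpha}_p\ket{v}$ and $\ket{\beta}_p\ket{u}$. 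A direct computation shows that the cross term $c_{i_1}c_{i_2}^*\ketbra{a_{i_1,\bar{\mathcal{J}}}}{a_{i_2,\bar{\mathcal{J}}}}$ and its conjugate contribute off-diagonal entries of modulus $|c_{i_1}c_{i_2}|$ at exactly these positions.

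The main obstacle is to show that no other terms contaminate this $2\times 2$ block, so that both diagonal entries vanish and the off-diagonals come solely from the $(i_1,i_2)$ pair. Suppose some row $i$ satisfied $a_{i,\bar{\mathcal{J}}}=(\alpha,v)$; then $i$ agrees with $i_2$ on the $N-k$ columns of $\bar{\mathcal{J}}\setminus\{p\}$, and since $N\geq 2k$, property (i) applied to any $k$ of those columns forces $i=i_2$, which in turn gives $\alpha=a_{i_2,p}=\beta$, a contradiction. A symmetric argument rules out $(\beta,u)$, and the analogous uniqueness of rows restricted to $\bar{\mathcal{J}}$ (using $|\bar{\mathcal{J}}|\geq k+1$) pins the off-diagonals to the desired single pair. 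The block is then anti-diagonal with eigenvalue $-|c_{i_1}c_{i_2}|<0$, so $\rho_{\bar{\mathcal{J}}}^{T_p}$ fails to be positive semidefinite and $\rho_{\bar{\mathcal{J}}}(A)$ is entangled across the chosen bipartition, in particular not fully separable.
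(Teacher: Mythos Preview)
Your proof is correct and follows essentially the same route as the paper's: diagonalise $\rho_{\bar{\mathcal{J}}}$ for $|\mathcal{J}|=k$ via property~(i), and for $|\mathcal{J}|=k-1$ apply the PPT criterion on a single particle $p\in\bar{\mathcal{J}}$, isolating a $2\times 2$ anti-diagonal principal block of $\rho_{\bar{\mathcal{J}}}^{T_p}$ produced by the two coincident rows. The paper performs the same analysis after a without-loss-of-generality relabelling of the two rows to $0\cdots 0$ and $0\cdots 0\,i_k\cdots i_N$, whereas you keep the abstract notation $(\alpha,u)$, $(\beta,v)$; the exclusion of spurious contributions to the block is the same argument in both cases, namely that $|\bar{\mathcal{J}}\setminus\{p\}|=N-k\geq k$ together with property~(i) pins rows uniquely by their restriction to $\bar{\mathcal{J}}\setminus\{p\}$ (resp.\ $\bar{\mathcal{J}}$).
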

The  proof of Theorem \ref{them1} is similar to that of \cite[Proposition 1]{PhysRevA.100.062329}, and thus moved to Appendix \ref{appendix2}.

In coding theory, \emph{an  $(N,r,d)_q$ code} $\mathcal{C}$ over $\mathbb{F}_q$ is a subset of $r$ vectors in $\mathbb{F}_q^N$, called \emph{codewords}, such that any two codewords have Hamming distance at least $d$, that is, the number of coordinates they differ is at least $d$. For convenience, we arrange all codewords in an array with each row a codeword, then we get an $r\times N$ array. For example, the array CA$(10,7,3,3)$ before is indeed a $(7,10,5)_3$ code.

Now we consider the properties that maintained by the array $A$ formed by an $(N,r,d)_q$ code. In  any $r\times (N-d+1)$ subarray of $A$, there is no repeated $(N-d+1)$-tuples in $\mathbb{F}_q^{N-d+1}$, since otherwise there will be two codewords with Hamming distance less than $d$. Further, if $r>q^{N-d}$, in any $r\times (N-d)$ subarray of $A$, there  exists at least one $(N-d)$-tuple in $\mathbb{F}_q^{N-d}$ that appears twice as a row by the Pigeonhole principle. So an $(N,r,d)_q$ code with $r>q^{N-d}$ is a CA$(r,N,q,N-d+1)$.  Then we have the following corollary of \autoref{them1}.

\begin{corollary} \label{codeconditionw}
If $A=(a_{ij})_{1\leq i\leq r;1\leq j\leq N}$ is an $(N,r,d)_q$ code with $N\leq 2d-2$ and $r>q^{N-d}$, then
$\ket{A}=\sum_{i=1}^r  c_i \ket{a_{i1}\cdots a_{iN}}$ is $(N-d)$-resistant for any $0\neq c_i\in \mathbb{C}$.
%

\end{corollary}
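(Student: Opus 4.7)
The plan is to derive this corollary directly from \autoref{them1} by verifying that the hypotheses of the corollary force the code to be a critical array with the right parameter. Set $k = N-d+1$. The first step is to check that the condition $N\leq 2d-2$ is exactly the condition $N \geq 2k$ required by \autoref{them1}, since $N \geq 2(N-d+1)$ rearranges to $N \leq 2d-2$. Thus, once we know $A$ is a CA$(r,N,q,k)$ with this $k$, the conclusion $(k-1)$-resistant $= (N-d)$-resistant follows immediately.

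Next I would verify the two defining properties of a $k$-critical array for $A$. For property (i), pick any $r \times k = r \times (N-d+1)$ subarray. If two distinct rows (codewords) agreed on all $N-d+1$ positions in this subarray, then they would agree on at least $N-d+1$ coordinates of the full length-$N$ code, hence differ in at most $N - (N-d+1) = d-1$ coordinates, contradicting the minimum Hamming distance $d$. Therefore each $(N-d+1)$-tuple in $\mathbb{F}_q^{N-d+1}$ appears at most once as a row.

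For property (ii), pick any $r \times (k-1) = r \times (N-d)$ subarray. There are only $q^{N-d}$ possible $(N-d)$-tuples in $\mathbb{F}_q^{N-d}$, but the subarray has $r > q^{N-d}$ rows. By the pigeonhole principle, at least one $(N-d)$-tuple must appear twice. This confirms (ii).

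Combining the two verifications, $A$ is a CA$(r,N,q,N-d+1)$ satisfying $N \geq 2k$, so \autoref{them1} applies and yields that $\ket{A} = \sum_{i=1}^r c_i \ket{a_{i1}\cdots a_{iN}}$ is $(N-d)$-resistant for any nonzero $c_i \in \mathbb{C}$. The proof is essentially a one-line reduction, with no genuine obstacle; the only thing to be careful about is bookkeeping the index $k = N-d+1$ correctly when translating the hypothesis $N \leq 2d-2$ into $N \geq 2k$, and invoking the pigeonhole argument for (ii) with the strict inequality $r > q^{N-d}$.
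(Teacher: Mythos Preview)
Your proposal is correct and follows essentially the same route as the paper: verify that an $(N,r,d)_q$ code with $r>q^{N-d}$ is a CA$(r,N,q,N-d+1)$ via the minimum-distance argument for property (i) and the pigeonhole principle for property (ii), then translate $N\leq 2d-2$ into $N\geq 2k$ with $k=N-d+1$ and invoke \autoref{them1}.
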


Note that in Corollary~\ref{codeconditionw}, we need the minimum distance $d$ at least half of $N$, so  the resultant $m$-resistant state has $m=N-d<N/2$. This is a complement of the methods by using the mixture of Dicke states, which may produce $m$-resistant state with $m>N/2$.


Let $A_q(N,d)$ denote the largest possible size $r$ for which there exists an $(N,r,d)_q$ code.  Then Corollary \autoref{codeconditionw} says that if $A_q(N,d)>q^{N-d}$ and $N\leq 2d-2$, an $(N-d)$-resistant state exists for a system of $N$ particles each with a local dimension
$q$.
The problem of determining the value of $A_q(n,d)$ is one of  the main problems in coding theory \cite{roman1992coding}.
Although determining the exact value of $A_q(n,d)$ is challenging, known results in the literature \cite{vaessens1993genetic,mackenzie1984maximal,bogdanova2001bounds,laaksonen2019new} can still give us many new  $m$-resistant states. We list these results for $N\leq 10$ in Tables~\ref{tabled3}.

\begin{table}[h!]
	\caption{The existence of $m$-resistant states for $N$  particles  of local dimensions $q=3$ or $4$ from codes. The table  reads as follows. If the $(m,N)$ cell is not empty, then the $m$-resistant states for $N$ particles  exists from the code with parameters in this cell by Corollary~\ref{codeconditionw}. For example, by the $(1,5)$ entry, a $1$-resistent state for $5$ particles with local dimension $3$ can be obtained from a $(5,6,4)_3$ code (Note that a $1$-resistent state for $5$ qubits was guessed not to exist in \cite{PhysRevA.100.062329}).   All ternary codes are from \cite{mackenzie1984maximal,vaessens1993genetic} and all quaternary codes are from \cite{bogdanova2001error}.}\label{tabled3}
	\centering
	\begin{tabular}{|c|c|c|c|c|c|c|}
		\toprule[1pt]
		\diagbox{$m$}{$N$}  & 5 & 6 & 7& 8 & 9 & 10 \\[5pt]
		\midrule[0.5pt]
		  1 & $(5,6,4)_3$ &  $(6,4,5)_3$ & $(7,8,6)_4$ & $(8,5,7)_4$ & $(9,5,8)_4$  &$(10,5,9)_4$\\[5pt]	\hline
         2 &  &   & $(7,10,5)_3$ &$(8,32,6)_4$  & $(9,18,7)_4$   &\\[5pt]	\hline
         3 &  &   &  &$(8,128,5)_4$  & $(9,70,6)_4$  &\\[5pt]	
		\bottomrule[1pt]
	\end{tabular}
\end{table}

\section{Conclusion} \label{sec:con}

In this paper, we provide two methods to construct $m$-resistant states, one is from the mixture of Dicke states, which is useful for $N$-qubit states with $m$ close to $N$, and another is from error correcting codes, which is useful for $N$-qudit states with higher local dimension and $m\leq N/2-1$.

Conjecture~\ref{conj} says that there exists an $m$-resistant
$N$-qudit state for any $N$ and $m$ with some local dimension $d$. However, the existence of $m$-resistant states for lower local dimensions implies that for higher local dimensions. So the determination of  the smallest local dimension for which there exists an $m$-resistant
$N$-qudit state is important and deserves further study. Here, we combine all known results and list the upper bound of this number in Table~\ref{tableall}. Previously, $1$-resistant states for systems with $5$, $6$, $7$, and $8$ particles exist for local dimensions at least $4$, $5$, $7$, and $7$, respectively \cite{PhysRevA.100.062329}.

\begin{table}[h]
	\centering
	\caption{Existence of $m$-resistant states with known minimum dimension. The number $2$ represents the lowest dimension. Dimensions greater than $2$ might be improved in the future.  All italic numbers are new from this paper. 
}\label{tableall}
	\begin{tabular}{c|c|c|c|c|c|c|c|}
		\toprule[1pt]
		\diagbox{$m$}{$N$} & 4 & 5 & 6 & 7 & 8  &9&10\\
		\midrule[0.5pt]
		0 & 2  & 2 &  2  & 2  & 2  & 2  & 2 \\
		
		1 & 2 & \emph{3}   & \emph{3}  & \emph{4} & \emph{4}  & \emph{4}&  \emph{4}\\
		
		2 & 2  & 2  & 2  & \emph{3}  & \emph{4}  & \emph{4}&  \\
		
		3 &  & 2  & 2  & \emph{2}  & \emph{4}  & \emph{4}&  \\
		
		4 &  &  & 2  & 2  & \emph{2}  & &\\
		
		5 &  &  &  & 2 & 2  & \emph{2} & \\
		
		6 &  &  &  &  & 2  & 2 & \emph{2}\\
		\bottomrule[1pt]
	\end{tabular}
\end{table}

Finally, the concept of strong $m$-resistant states was proposed to reflect the genuine entanglement robustness against particle loss.
Notice that if a state is not a GME, it is not necessarily separable, which is required in the concept of strong $m$-resistant states. So it is natural to raise the following concept: a genuinely entangled state is a  \emph{genuinely} $m$-resistant state if it  maintains GME after the loss of any $m$ particles, but becomes biseparable after the loss of any $m+1$ particles.  Then there is a relationship between the three concepts as in \autoref{fig1}. Just as states that are biseparable but not fully separable have attracted a lot of attention \cite{PhysRevLett.82.5385,PhysRevA.85.022322,RevModPhys.81.865}, it is worthwhile to explore whether there are states that are genuinely $m$-resistant but not strong $m$-resistant.

\begin{figure}
	\centering
	\includegraphics[width=0.4 \textwidth]{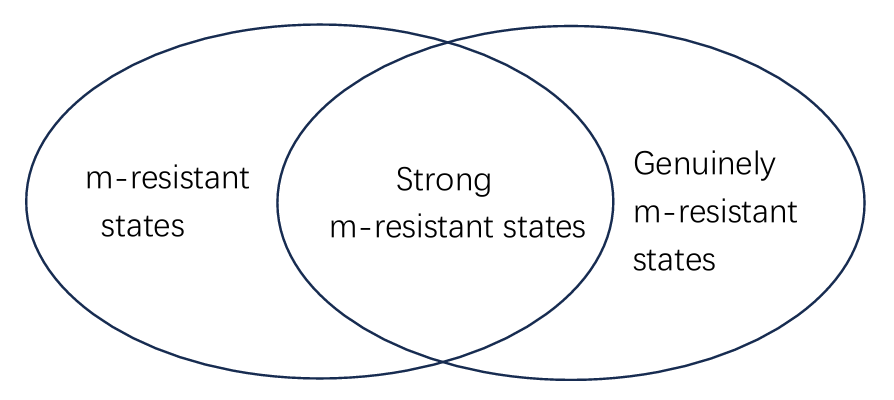}
	\caption{Strong $m$-resistant states are the intersection of $m$-resistant states and genuinely $m$-resistant states.}\label{fig1}
\end{figure}

	\section{Acknowledgments}
The research of W. Zhang, Z. Han and X. Zhang were supported by the
Innovation Program for Quantum Science and Technology 2021ZD0302902,
the NSFC under Grants No. 12171452 and No. 12231014, and the National
Key Research and Development Programs of China 2023YFA1010200 and
2020YFA0713100.
The research of F. Shi was supported by the HKU Seed Fund
for Basic Research for New Staff via Project 2201100596, Guangdong Natural
Science Fund via Project 2023A1515012185, National Natural Science Foun
dation of China (NSFC) via Project No. 12305030 and No. 12347104, Hong
Kong Research Grant Council (RGC) via No. 27300823, N\_HKU718/23,
and R6010-23, Guangdong Provincial Quantum Science Strategic Initiative
GDZX2200001.

\appendix
\section{A construction of $(N-5)$-resistant qubit states}\label{appendix1}
In this appendix, we will show that
\begin{equation}\label{psiN}
	\ket{\psi^{N-5}}=a\ket{D^N_1}+b\ket{D^N_{7}}+c\ket{D^N_{N}}
\end{equation}
with $a^2=\frac{\binom{N-4}{5}N(N-3)}{30\binom{N}{7}}b^2$ and $c$ being large compared to $a$ and $b$ is strong $(N-5)$-resistant when $N\ge 13$. By Observation~\ref{ob2},  $\ket{\psi^{N-5}}$ is a $5$-mixture Dicke state when $N\ge 13$.

Computing
\begin{equation}
	\begin{aligned}
		\rho_{[4]}(\psi^{N-5})=&
	a^2\text{Tr}_{[N]\setminus[4]} \ketbra{D^N_1}+b^2\text{Tr}_{[N]\setminus[4]} \ketbra{D^N_{7}} + c^2\text{Tr}_{[N]\setminus[4]} \ketbra{D^N_{N}}
\\=&
		a^2\left(\frac{N-4}{N} \ketbra {D^4_0} + \frac{4}{N} \ketbra{D^4_1}\right)
		 +	\\&b^2\left(\frac{\binom{N-4}{7}}{\binom{N}{7}} \ketbra {D^4_0} + 4\frac{\binom{N-4}{6}}{\binom{N}{7}} \ketbra{D^4_1}+
		6\frac{\binom{N-4}{5}}{\binom{N}{7}} \ketbra{D^4_2} +
		4\frac{\binom{N-4}{4}}{\binom{N}{7}} \ketbra{D^4_3} +
		\frac{\binom{N-4}{3}}{\binom{N}{7}} \ketbra{D^4_4}\right)\\&
		+c^2 \ketbra{D^4_4},
	\end{aligned}
\end{equation}
 we have the $M_0$ and $M_1$ of $\rho_{[4]}(\psi^{N-5})$ can be written as
\begin{equation}
	M_0
	=
	\left(
	\begin{array}{lll}
		\frac{N-4}{N}a^2 + \frac{\binom{N-4}{7}}{\binom{N}{7}}b^2 & \frac{1}{N}a^2+\frac{\binom{N-4}{6}}{\binom{N}{7}}b^2 & \frac{\binom{N-4}{5}}{\binom{N}{7}}b^2\\
		\frac{1}{N}a^2+\frac{\binom{N-4}{6}}{\binom{N}{7}}b^2 & \frac{\binom{N-4}{5}}{\binom{N}{7}}b^2 & \frac{\binom{N-4}{4}}{\binom{N}{7}}b^2\\
		\frac{\binom{N-4}{5}}{\binom{N}{7}}b^2 & \frac{\binom{N-4}{4}}{\binom{N}{7}}b^2 & \frac{\binom{N-4}{3}}{\binom{N}{7}}b^2 + c^2\\
	\end{array}\right)
	\ \text{and} \
	M_1 = \left(
	\begin{array}{lll}
		\frac{1}{N}a^2+\frac{\binom{N-4}{6}}{\binom{N}{7}}b^2 &  \frac{\binom{N-4}{5}}{\binom{N}{7}}b^2 \\
		\frac{\binom{N-4}{5}}{\binom{N}{7}}b^2 &  \frac{\binom{N-4}{4}}{\binom{N}{7}}b^2 \\
	\end{array}\right).
\end{equation}

Suppose that $a^2=\frac{\binom{N-4}{5}N(N-3)}{30\binom{N}{7}}b^2$,
then the determinant of $M_1$ is $0$, which implies that $M_1$ is positive semidefinite. It can be checked that there is a $2 \times 2$ principal submatrix in the upper left corner of $M_0$ that is positive semidefinite.
Thus we can always find $c$ large enough to make the determinant of $M_0$ greater than $0$. Then we can obtain that all leading principal minors of $M_0$ are greater than $0$, that is, $M_0$ is positive definite. Thus, $\rho_{[4]}(\psi^{N-5})$ is separable.

Computing
\begin{equation}
	\begin{aligned}
		\rho_{[5]}(\psi^{N-5})=&
		a^2\text{Tr}_{[N]\setminus[5]} \ketbra{D^N_1}+b^2\text{Tr}_{[N]\setminus[5]} \ketbra{D^N_{7}} + c^2\text{Tr}_{[N]\setminus[5]} \ketbra{D^N_{N}}
		\\=&
		a^2\left(\frac{N-5}{N} \ketbra {D^5_0} + \frac{5}{N} \ketbra{D^5_1}\right)
	 +	\\&b^2\left(\frac{\binom{N-5}{7}}{\binom{N}{7}} \ketbra {D^5_0} + \frac{5\binom{N-5}{6}}{\binom{N}{7}} \ketbra{D^5_1}+
		\frac{10\binom{N-5}{5}}{\binom{N}{7}} \ketbra{D^5_2} +
		\frac{10\binom{N-5}{4}}{\binom{N}{7}} \ketbra{D^5_3}
+\right.\\&
		\left.\frac{5\binom{N-5}{3}}{\binom{N}{7}} \ketbra{D^5_4} +
		\frac{\binom{N-5}{2}}{\binom{N}{7}} \ketbra{D^5_5}\right)
		+c^2 \ketbra{D^5_5},
	\end{aligned}
\end{equation}
 we have the $M_0$ of $\rho_{[5]}(\psi^{N-5})$ can be written as

\begin{equation}
	M_0
	\propto
	\left(
	\begin{array}{lll}
		\frac{N-5}{N}a^2 + \frac{\binom{N-5}{7}}{\binom{N}{7}}b^2 & \frac{1}{N}a^2+\frac{\binom{N-5}{6}}{\binom{N}{7}}b^2 & \frac{\binom{N-5}{5}}{\binom{N}{7}}b^2\\
		\frac{1}{N}a^2+\frac{\binom{N-5}{6}}{\binom{N}{7}}b^2 & \frac{\binom{N-5}{5}}{\binom{N}{7}}b^2 & \frac{\binom{N-5}{4}}{\binom{N}{7}}b^2\\
		\frac{\binom{N-5}{5}}{\binom{N}{7}}b^2 & \frac{\binom{N-5}{4}}{\binom{N}{7}}b^2 & \frac{\binom{N-5}{3}}{\binom{N}{7}}b^2 \\
	\end{array}\right).
\end{equation}

It is easy to see that there is a $2 \times 2$ principal submatrix in the lower right corner of $M_0$ that is not positive semidefinite, which means that $M_0$ is not positive semidefinite. This shows that $	\rho_{[5]}(\psi^{N-5})$ is entangled. Then, we can conclude that $\ket{\psi^{N-5}}$ is strong $(N-4)$-resistant when $N\geq 13$ by Observation~\ref{kmix}.

\section{Proof of Theorem 1}\label{appendix2}
%
\begin{proof}
	In order to prove that $\ket{A}$ is  $(k-1)$-resistant, we need to prove the following two properties:
	\begin{itemize}
		\item  $\rho_{\bar{\mathcal{J}}}(A)$ is entangled for any $\mathcal{J}$ with $|\mathcal{J}|=k-1$.		
		\item   $\rho_{\bar{\mathcal{J}}}(A)$ is separable for any $\mathcal{J}$ with $|\mathcal{J}|=k$.
	\end{itemize}
	By the fact that each $k$-tuple appearing at most once in any $k$ columns of $A$, we have that  $\rho_{\bar{\mathcal{J}}}(A)$ is a diagonal matrix. Hence, $\rho_{\bar{\mathcal{J}}}(A)$ is fully separable for any $\mathcal{J}$ with $|\mathcal{J}|=k$.
	
	Now, we  prove the first property, that is, $\rho_{\bar{\mathcal{J}}}(A)$ is entangled for any $\mathcal{J}$ with $|\mathcal{J}|=k-1$. We prove this result using the PPT criterion, that is, after the partial transpose of any particle, this state always has a negative eigenvalue.
     Since there exists at least one $(k-1)$-tuple that appears twice for any $k-1$ columns in $A$, we can take two rows of $A$, which are the same in the first $(k-1)$ positions.
     Without loss of generality assume that they are of the form
    	$$
    \begin{matrix}
    	& {0 \ldots 0 }& { 0 \ 0 \ldots 0} \\
    	&	\underbrace{0 \ldots 0}_{k-1} & \underbrace{i_k \ldots i_N}_{N-k+1}
    \end{matrix} ,
$$  where $i_t\neq 0$ with $k\le t \le N$.

Then the proof is almost the same as in \cite[Proposition 1]{PhysRevA.100.062329}. We sketch the proof here for completeness. 
Consider the density matrix $\rho_{[N]\setminus[k-1]}(A)$ obtained by tracing
out the first $(k-1)$ particles. Then, we examine the eigenvalues
of the partial transpose of the last particle  $\rho_{[N]\setminus[k-1]}^{T_l}(A)$.
The aforementioned rows generate the following elements in $\rho_{[N]\setminus[k-1]}^{T_l}(A)$:
$$\ketbra{0\ldots 0  i_N}{i_k \ldots i_{N-1}  0} \ \text{and} \ \ketbra{i_k \ldots i_{N-1}  0}{0\ldots 0  i_N}.$$
We claim that these elements are the only ones interlocking the bras $\bra{i_k \ldots i_{N-1}  0}$, $\bra{0\ldots 0 i_N}$ and the kets $\ket{0\ldots 0  i_N}$, $\ket{i_k \ldots i_{N-1} 0}$.
Taking $\ket{0\ldots 0  i_N}$ as an example, one can deduce it from the following two observations:
\begin{itemize}
	\item  The row $0 \ldots 0 j_k \ldots j_{N-1} i_N$ with $j_k\ldots j_{N-1} \neq i_k \ldots i_{N-1}$ will not be in $A$, otherwise there will be a repeated $k$-tuple, which is a contradiction.
	\item Since $N\ge 2k$, that is $N-k \ge k$, the row $j_1\ldots j_{k-1} \underbrace{0 \ldots 0}_{N-k} j_N$ with $j_1\ldots j_{k-1} \neq 0 \ldots 0$ will not be in $A$ either.
\end{itemize}

	Then $\rho_{[N]\setminus[k-1]}^{T_l}(A)$ can be written as
	\begin{equation}
		\rho_{[N]\setminus[k-1]}^{T_l}(\phi)\propto
		\left(\begin{array}{ccccc}
			0 & 1 & 0 \cdots & 0\\
			1 & 0 & 0 \cdots & 0\\
			0 & 0                      \\
			\vdots & \vdots & {\Huge B}   \\
			0 & 0
		\end{array}\right),
	\end{equation}
	where we have changed the order of the computational basis
	in such a way that the first two vectors are $\ket{0\ldots 0}$ and $\ket{i_k\ldots i_N}$.
	
	Consequently, $\rho_{[N]\setminus[k-1]}^{T_l}(A)$ has at least one negative eigenvalue, namely, $-1$. 
	This ends the proof.			
		\end{proof}

\bibliographystyle{IEEEtran}
\bibliography{reference}

\begin{thebibliography}{10}
\providecommand{\url}[1]{#1}
\csname url@samestyle\endcsname
\providecommand{\newblock}{\relax}
\providecommand{\bibinfo}[2]{#2}
\providecommand{\BIBentrySTDinterwordspacing}{\spaceskip=0pt\relax}
\providecommand{\BIBentryALTinterwordstretchfactor}{4}
\providecommand{\BIBentryALTinterwordspacing}{\spaceskip=\fontdimen2\font plus
\BIBentryALTinterwordstretchfactor\fontdimen3\font minus
  \fontdimen4\font\relax}
\providecommand{\BIBforeignlanguage}[2]{{%
\expandafter\ifx\csname l@#1\endcsname\relax
\typeout{** WARNING: IEEEtran.bst: No hyphenation pattern has been}%
\typeout{** loaded for the language `#1'. Using the pattern for}%
\typeout{** the default language instead.}%
\else
\language=\csname l@#1\endcsname
\fi
#2}}
\providecommand{\BIBdecl}{\relax}
\BIBdecl

\bibitem{PhysRevA.59.141}
G.~Vidal and R.~Tarrach, ``Robustness of entanglement,'' \emph{Physical Review
  A}, vol.~59, pp. 141--155, 1999.

\bibitem{PhysRevA.65.032328}
A.~K. Rajagopal and R.~W. Rendell, ``Robust and fragile entanglement of three
  qubits: {R}elation to permutation symmetry,'' \emph{Physical Review A},
  vol.~65, p. 032328, 2002.

\bibitem{sugita2007borromean}
A.~Sugita, ``Borromean entanglement revisited,'' \emph{arXiv:0704.1712}, 2007.

\bibitem{PhysRevA.97.042307}
G.~M. C.~M. Quinta and R.~Andr\'e, ``Classifying quantum entanglement through
  topological links,'' \emph{Physical Review A}, vol.~97, p. 042307, 2018.

\bibitem{PhysRevA.98.062335}
A.~Neven, J.~Martin, and T.~Bastin, ``Entanglement robustness against particle
  loss in multiqubit systems,'' \emph{Physical Review A}, vol.~98, p. 062335,
  2018.

\bibitem{PhysRevA.100.062329}
G.~M. C.~M. Quinta, R.~Andr\'e, A.~Burchardt, and K.~\ifmmode~\dot{Z}\else
  \.{Z}\fi{}yczkowski, ``Cut-resistant links and multipartite entanglement
  resistant to particle loss,'' \emph{Physical Review A}, vol. 100, p. 062329,
  2019.

\bibitem{PhysRevLett.109.233601}
A.~W. Chin, S.~F. Huelga, and M.~B. Plenio, ``Quantum metrology in
  non-{M}arkovian environments,'' \emph{Physical Review Letters}, vol. 109, p.
  233601, 2012.

\bibitem{RevModPhys.81.865}
R.~Horodecki, P.~Horodecki, M.~Horodecki, and K.~Horodecki, ``Quantum
  entanglement,'' \emph{Reviews of Modern Physics}, vol.~81, pp. 865--942,
  2009.

\bibitem{GUHNE20091}
O.~Gühne and G.~Tóth, ``Entanglement detection,'' \emph{Physics Reports},
  vol. 474, no.~1, pp. 1--75, 2009.

\bibitem{aravind1997borromean}
P.~Aravind, ``Borromean entanglement of the {GHZ} state,'' \emph{Potentiality,
  Entanglement and Passion-at-a-Distance: Quantum Mechanical Studies for Abner
  Shimony Volume Two}, pp. 53--59, 1997.

\bibitem{PhysRevA.63.020303}
W.~D\"ur, ``Multipartite entanglement that is robust against disposal of
  particles,'' \emph{Physical Review A}, vol.~63, p. 020303, 2001.

\bibitem{balasubramanian2017multi}
V.~Balasubramanian, J.~R. Fliss, R.~G. Leigh, and O.~Parrikar, ``Multi-boundary
  entanglement in {C}hern-{S}imons theory and link invariants,'' \emph{Journal
  of High Energy Physics}, vol. 2017, no.~4, pp. 1--34, 2017.

\bibitem{PhysRevResearch.3.043120}
M.-X. Luo and S.-M. Fei, ``Robust multipartite entanglement without
  entanglement breaking,'' \emph{Physical Review Research}, vol.~3, p. 043120,
  2021.

\bibitem{lvovsky2009optical}
A.~I. Lvovsky, B.~C. Sanders, and W.~Tittel, ``Optical quantum memory,''
  \emph{Nature photonics}, vol.~3, no.~12, pp. 706--714, 2009.

\bibitem{PhysRevA.86.042113}
N.~Brunner and T.~V\'ertesi, ``Persistency of entanglement and nonlocality in
  multipartite quantum systems,'' \emph{Physical Review A}, vol.~86, p. 042113,
  2012.

\bibitem{PhysRevLett.80.2245}
W.~K. Wootters, ``Entanglement of formation of an arbitrary state of two
  qubits,'' \emph{Physical Review Letters}, vol.~80, pp. 2245--2248, 1998.

\bibitem{PhysRevA.62.050302}
M.~Koashi, V.~Bu\ifmmode~\check{z}\else \v{z}\fi{}ek, and N.~Imoto, ``Entangled
  webs: Tight bound for symmetric sharing of entanglement,'' \emph{Physical
  Review A}, vol.~62, p. 050302, 2000.

\bibitem{PhysRevLett.86.910}
H.~J. Briegel and R.~Raussendorf, ``Persistent entanglement in arrays of
  interacting particles,'' \emph{Physical Review Letters}, vol.~86, pp.
  910--913, 2001.

\bibitem{PhysRevA.67.022112}
J.~K. Stockton, J.~M. Geremia, A.~C. Doherty, and H.~Mabuchi, ``Characterizing
  the entanglement of symmetric many-particle spin-$\frac{1}{2}$ systems,''
  \emph{Physical Review A}, vol.~67, p. 022112, 2003.

\bibitem{PhysRevA.86.052335}
W.~Helwig, W.~Cui, J.~I. Latorre, A.~Riera, and H.-K. Lo, ``Absolute maximal
  entanglement and quantum secret sharing,'' \emph{Physical Review A}, vol.~86,
  p. 052335, 2012.

\bibitem{PhysRevA.87.052117}
U.~Mishra, A.~Sen, and U.~Sen, ``Quantum superposition in composite systems of
  microscopic and macroscopic parts resistant to particle loss and local
  decoherence,'' \emph{Physical Review A}, vol.~87, p. 052117, 2013.

\bibitem{PhysRevA.58.4394}
A.~Karlsson and M.~Bourennane, ``Quantum teleportation using three-particle
  entanglement,'' \emph{Physical Review A}, vol.~58, pp. 4394--4400, 1998.

\bibitem{steane1998quantum}
A.~Steane, ``Quantum computing,'' \emph{Reports on Progress in Physics},
  vol.~61, no.~2, p. 117, 1998.

\bibitem{shi2024entanglement}
F.~Shi, L.~Chen, G.~Chiribella, and Q.~Zhao, ``Entanglement detection length of
  multipartite quantum states,'' \emph{Physical Review Letters}, vol. 134, p.
  050201, 2025.

\bibitem{GUHNE2011406}
O.~Gühne, ``Entanglement criteria and full separability of multi-qubit quantum
  states,'' \emph{Physics Letters A}, vol. 375, no.~3, pp. 406--410, 2011.

\bibitem{PhysRevD.35.3066}
G.~Svetlichny, ``Distinguishing three-body from two-body nonseparability by a
  {B}ell-type inequality,'' \emph{Physical Review D}, vol.~35, pp. 3066--3069,
  1987.

\bibitem{PhysRevA.100.062318}
M.~Demianowicz and R.~Augusiak, ``Entanglement of genuinely entangled subspaces
  and states: {E}xact, approximate, and numerical results,'' \emph{Physical
  Review A}, vol. 100, p. 062318, 2019.

\bibitem{PhysRevA.94.060101}
N.~Yu, ``Separability of a mixture of {D}icke states,'' \emph{Physical Review
  A}, vol.~94, p. 060101, 2016.

\bibitem{dicke1954coherence}
R.~H. Dicke, ``Coherence in spontaneous radiation processes,'' \emph{Physical
  review}, vol.~93, no.~1, p.~99, 1954.

\bibitem{partington1988introduction}
J.~R. Partington, \emph{An introduction to {H}ankel operators}.\hskip 1em plus
  0.5em minus 0.4em\relax Cambridge University Press, 1988, no.~13.

\bibitem{ECKERT200288}
K.~Eckert, J.~Schliemann, D.~Bru{\ss}, and M.~Lewenstein, ``Quantum
  correlations in systems of indistinguishable particles,'' \emph{Annals of
  Physics}, vol. 299, no.~1, pp. 88--127, 2002.

\bibitem{i2016characterizing}
J.~T. i~Brugu{\'e}s, \emph{Characterizing entanglement and quantum correlations
  constrained by symmetry}.\hskip 1em plus 0.5em minus 0.4em\relax Springer,
  2016.

\bibitem{bush1952orthogonal}
K.~A. Bush, ``Orthogonal arrays of index unity,'' \emph{The Annals of
  Mathematical Statistics}, pp. 426--434, 1952.

\bibitem{roman1992coding}
S.~Roman, \emph{Coding and information theory}.\hskip 1em plus 0.5em minus
  0.4em\relax Springer Science \& Business Media, 1992, vol. 134.

\bibitem{vaessens1993genetic}
R.~J.~M. Vaessens, E.~H.~L. Aarts, and J.~H. van Lint, ``Genetic algorithms in
  coding theory-a table for ${A}_3(n, d)$,'' \emph{Discrete Applied
  Mathematics}, vol.~45, no.~1, pp. 71--87, 1993.

\bibitem{mackenzie1984maximal}
C.~Mackenzie and J.~Seberry, ``Maximal ternary codes and {P}lotkin's bound,''
  \emph{Ars Combin}, vol. 17A, pp. 251--270, 1984.

\bibitem{bogdanova2001bounds}
G.~T. Bogdanova and P.~R. {\"O}sterg{\aa}rd, ``Bounds on codes over an alphabet
  of five elements,'' \emph{Discrete Mathematics}, vol. 240, no. 1-3, pp.
  13--19, 2001.

\bibitem{laaksonen2019new}
A.~Laaksonen and P.~R. {\"O}sterg{\aa}rd, ``New lower bounds on $q$-ary
  error-correcting codes,'' \emph{Cryptography and Communications}, vol.~11,
  pp. 881--889, 2019.

\bibitem{bogdanova2001error}
G.~T. Bogdanova, A.~E. Brouwer, S.~N. Kapralov, and P.~R. {\"O}sterg{\aa}rd,
  ``Error-correcting codes over an alphabet of four elements,'' \emph{Designs,
  Codes and Cryptography}, vol.~23, pp. 333--342, 2001.

\bibitem{PhysRevLett.82.5385}
C.~H. Bennett, D.~P. DiVincenzo, T.~Mor, P.~W. Shor, J.~A. Smolin, and B.~M.
  Terhal, ``Unextendible product bases and bound entanglement,'' \emph{Physical
  Review Letters}, vol.~82, pp. 5385--5388, 1999.

\bibitem{PhysRevA.85.022322}
G.~T\'oth, ``Multipartite entanglement and high-precision metrology,''
  \emph{Physical Review A}, vol.~85, p. 022322, 2012.

\end{thebibliography}
\end{document}